\newtheorem{theorem}{Theorem}
\newtheorem{lemma}{Lemma}
\theoremstyle{definition}
\newcommand{\ui}{\textrm{i}}
\newcommand{\ue}{\textrm{e}}
\newcommand{\UI}{\mathbf{I}}
\newcommand{\ud}{\mathrm{d}}
\newcommand{\bbN}{{\mathbb N}}
\newcommand{\re}{\Re}
\newcommand{\bdm}{\begin{displaymath}}
\newcommand{\edm}{\end{displaymath}}
\newcommand{\beq}{\begin{equation}}
\newcommand{\eeq}{\end{equation}}
\newcommand{\beqa}{\begin{eqnarray}}
\newcommand{\eeqa}{\end{eqnarray}}
\newcommand{\nn}{\nonumber}
\newcommand{\E}{\mathcal{E}}
\newcommand{\gS}{\mathcal{S}}
\newcommand{\gL}{\mathcal{L}}
\newcommand{\ppo}{\bar{\gamma}} 
\newcommand{\po}{\tilde{\gamma}} 
\newcommand{\Graph}{G}
\newcommand{\Vertices}{\mathcal{V}}
\newcommand{\Bonds}{\mathcal{B}}
\newcommand{\Reals}{\mathbb{R}}
\newcommand{\CombinatorialGraph}{\mathcal{G}}
\begin{document}

\title[Finite pseudo orbit expansions for spectral quantities of quantum graphs]{Finite pseudo orbit expansions for spectral quantities of quantum graphs}

\author{R Band$^1$, J M Harrison$^{2}$ and C H Joyner$^1$}

\address{$^1$Department of Mathematics, University Walk, Clifton, Bristol BS8 1TW, UK}
\address{$^2$Department of Mathematics, Baylor University, Waco, TX 76798, USA}

\eads{\mailto{rami.band@bristol.ac.uk}, \mailto{jon\_harrison@baylor.edu}, \mailto{chris.joyner@bristol.ac.uk}}
\begin{abstract}
We investigate spectral quantities of quantum graphs by expanding them as sums over \emph{pseudo orbits}, sets of periodic orbits.  Only a finite collection of pseudo orbits which are irreducible and where the total number of bonds is less than or equal to the number of bonds of the graph appear, analogous to a cut off at half the Heisenberg time.  The calculation simplifies previous approaches to pseudo orbit expansions on graphs.  We formulate coefficients of the characteristic polynomial and derive a secular equation in terms of the irreducible pseudo orbits.  From the secular equation, whose roots provide the graph spectrum, the zeta function is derived using the argument principle.  The spectral zeta function enables quantities, such as the spectral determinant and vacuum energy, to be obtained directly as finite expansions over the set of short irreducible pseudo orbits.
\end{abstract}

\pacs{02.10.Ox, 02.30.Tb, 03.65.-w}
\ams{34B45, 81Q10, 81Q35}
\submitto{\JPA}
\maketitle

\section{Introduction}

\label{sec:intro}

Quantum graphs provide models in many disparate areas of mathematical physics.  They have been used to describe electrons in organic molecules, nanotechnology, wave guides, Anderson localization, quantum chaos and mesoscopic systems to name but a few, see \cite{p:K02} for a review.  One of the reasons why quantum graphs have been so successful is that they possess an elegant exact trace formula expansion for the spectrum in terms of
periodic orbits on the graph. Such formul{\ae} were first derived by Roth
\cite{Rot_incol83} and independently by Kottos and Smilansky \cite{KotSmi_prl97,p:KS99}, see \cite{p:BE09} for the current state of the art.
The prototype for quantum mechanical trace formul{\ae}
was provided by Gutzwiller who showed that in the semiclassical
limit the energy spectrum can be expressed as a sum involving an infinite number of classical periodic orbits
\cite{Gut_chaos90}.    For
most systems this trace formula suffers from convergence issues
that can only be resolved by limiting the contribution of long periodic
orbits and in the process valuable information is suppressed. One
way to recover this information is through a remarkable resummation
procedure which connects long and short pseudo orbits (collections
of periodic orbits) via a truncated quantum zeta function. The resulting
expression, discovered by Berry and Keating \cite{BerKea_jpa90,BerKea_prsl92,Kea_prsl92}
and derived using a different method by Bogomolny
\cite{Bog_non92},
is termed the Riemann-Siegel
lookalike formula due to the connection with its number theoretical
namesake. It allows one to approximately determine the quantum spectrum
up to some given energy, using only a finite number of periodic orbits.

For quantum graphs, there exists both an exact Riemann-Siegel
lookalike formula and an exact pseudo orbit expansion including only a finite number of pseudo orbits with at most as many bonds as the graph.   Kottos and Smilansky describe this without an explicit formula in \cite{p:KS99}.
Here we provide a basic argument to obtain a minimal pseudo orbit expansion.
This also sheds light on the origin of the cancelations seen in such pseudo orbit expansions.
We use the expansion to examine the coefficient of the characteristic polynomial and
to produce a simple expression for a secular equation whose roots provide the spectrum of the graph
$0\leqslant \lambda_{1}\leqslant \lambda_{2}\leqslant\dots$.  This enables the whole spectrum of the graph to be formulated in terms of a finite number of pseudo orbits which visit at most the same number of bonds as are in the graph.

To demonstrate the efficacy of this minimal pseudo orbit expansion we describe two applications to the spectral determinant and vacuum energy of graphs.
Formally the spectral determinant
is the product of the eigenvalues,
\begin{equation}
\gS(\lambda)={\prod_{n=1}^{\infty}}\phantom{|}^{\prime} (\lambda+\lambda_{n})\ ,\label{eq:formal spec det}
\end{equation}
where $\lambda$ is some spectral parameter and the prime indicates that zero modes are omitted where present. The spectral determinant on
graphs has been investigated by a number of authors \cite{p:BK99,p:CDT05,p:D00,p:D01,p:F06,p:HK10,p:HK11,p:HKT12,p:KMW03}.
In particular in \cite{AkkComDebMonTex_ap00} Akkermans et al.  derive a formula for the spectral determinant of the graph as a product over all periodic orbits.
After expanding the product they discuss cancelations between certain sets of periodic orbits using a diagrammatic approach.
While the pseudo orbit expansion that we describe in section \ref{sec:det} has not been written down previously, it can be obtained using the cancelation mechanisms from \cite{AkkComDebMonTex_ap00}, as is explained in the appendix.
The method that we employ in the body of the paper is more straightforward and the proof shows that the existence of such a pseudo orbit expansion does not depend on the structure of the scattering matrix, as implied in \cite{AkkComDebMonTex_ap00}.

A second application of the pseudo orbit expansion allows us to provide a trace formula over the short pseudo orbits
for the vacuum energy of the graph. Formally the vacuum energy is
\begin{equation}
\E_{c}=\frac{1}{2}\sum_{n=1}^{\infty}\phantom{|}^{\prime}\sqrt{\lambda_{n}}\ .\label{eq:formal Ec}
\end{equation}
 Vacuum energy is responsible for the Casimir effect, conventionally
seen as an attraction between uncharged parallel conducting plates
at short distances.  Analytic results for the vacuum energy
in most geometries are hard to obtain and consequently it
is interesting to consider mathematical models of vacuum energy where
the analysis can be taken further \cite{p:LMB01}.  In \cite{p:FKW07}
Fulling et. al. consider the vacuum energy of a star graph with Neumann conditions at the vertices and show that the Casimir effect is repulsive.  The vacuum energy for generic
graphs and vertex conditions was investigated in \cite{p:BHW09,p:HK10,p:HK11,p:HK12}.

To regularize both the spectral determinant and vacuum energy
we employ the spectral zeta function,
\begin{equation}
\zeta(s,\lambda)={\sum_{n=1}^{\infty}}\phantom{|}^{\prime}(\lambda+\lambda_{n})^{-s}\ ,\label{eq:defn spec zeta}
\end{equation}
 which is convergent for $\re s>1$.
Making an analytic continuation to the complex $s$ plane the spectral
determinant is defined to be,
\begin{equation}
\gS(\lambda):=\exp(-\zeta'(0,\lambda))\ .\label{eq:spec det defn}
\end{equation}
 Similarly the regularized vacuum energy is obtain from the $\zeta(-1/2,0)$,
see section \ref{sec:vac}.

The article is laid out as follows. Section \ref{sec:intro_quantum_graphs} introduces quantum graphs and the scattering approach used to define a secular function.  We then provide, in section \ref{sec:expansion_for_secular_function}, a simple derivation for the pseudo orbit expansion of the secular function. The following sections are devoted to applications of this expansion. Section \ref{sec:coefficients_of_char_poly} examines the variance of the characteristic polynomial coefficients, section \ref{sec:zeta} develops a regularized expression for the zeta function and section \ref{sec:trace} uses this expression to obtain pseudo orbit expansions for the spectral determinant and vacuum energy. After summarizing the results we include an appendix which describes alternative cancelation mechanisms which provide pseudo orbits expansions.

\section{Quantum graphs}\label{sec:intro_quantum_graphs}
Let $\CombinatorialGraph=(\Vertices,\Bonds)$ be a graph with the finite sets of
vertices $\Vertices$ and bonds (or edges) $\Bonds$. Each bond, $b\in\Bonds$,
connects a pair of vertices, $u,v\in\Vertices$, see figure \ref{fig:fcs}.
We use the notation $b=\left(u,v\right)$ to denote a directed bond where $o\left(b\right)=u$ and
$t\left(b\right)=v$, with $o$ and $t$ standing for the origin and
the terminus of $b$.

We denote by $\hat{b}=\left(v,u\right)$
the reverse of the directed bond $b$.  For every bond $b\in \Bonds$ we insist that
$\hat{b}$ is also in $\Bonds$. The size of the set $\Bonds$ is
then denoted by $2B:=\left|\Bonds\right|$. The set of all bonds whose
origin is the vertex $v$ is denoted $\Bonds_{v}$ and its size is
called the degree of the vertex, $d_{v}:=\left|\Bonds_{v}\right|$.

The combinatorial graph $\CombinatorialGraph$ becomes a metric graph $\Graph$ when we identify the bonds of $\CombinatorialGraph$
with one dimensional intervals $[0,l_b]$ with a positive finite length $l_{b}$
($l_{\hat{b}}=l_{b}$). We can then assign to each bond
$b$ a coordinate, $x_{b}$, which measures the distance
along the bond starting from $o(b)$. The reverse directed
bond, $\hat{b}$, posses the coordinate, $x_{\hat{b}}$, and the two
coordinates are related by $x_{b}+x_{\hat{b}}=l_{b}$. We denote
a coordinate by $x$, when its precise nature is unimportant.
$2\mathcal{L}=\sum_{b\in \Bonds} L_b$ is twice the total length of $\Graph$.

\begin{figure}[!ht]
  \begin{center}
  \setlength{\unitlength}{1.3cm}
    \begin{picture}(9,3)
    \put(0,0.5){\includegraphics[width=5.2cm]{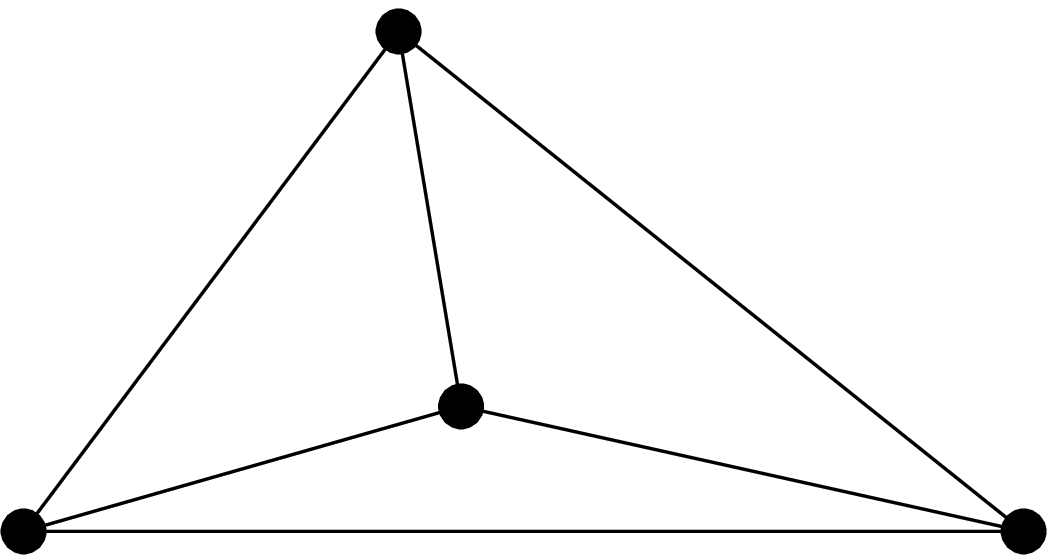}}
    \put(5.5,0){\includegraphics[width=4.55cm]{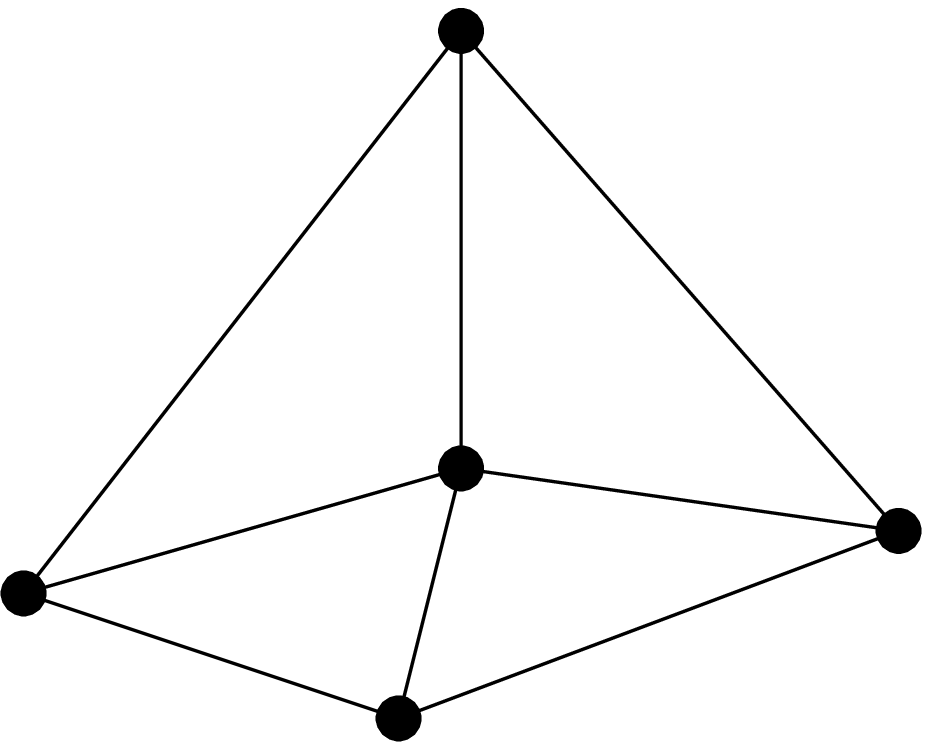}}
    \put(0,2.6){(a)}
    \put(5.5,2.6){(b)}
    \end{picture}
  \caption{\it (a) A fully connected graph with four vertices and six bonds and (b) a graph with five vertices and eight bonds.}
  \label{fig:fcs}
  \end{center}
\end{figure}

The quantum graphs we consider are metric graphs equipped with a self-adjoint
differential operator $\mathcal{H}$, the Hamiltonian.
Here we are particularly interested in the negative Laplace operator,
\begin{equation}
\mathcal{H}\ :\ f(x)\mapsto-\frac{d^{2}f}{dx^{2}} \ , \label{E:lap_op}
\end{equation}
 or the more general Schr\"odinger operator,
\begin{equation}
\mathcal{H}\ :\ f(x)\mapsto-\frac{d^{2}f}{dx^{2}}+V(x)f(x) \ ,\label{E:schrod}
\end{equation}
 where $V(x)$ is a \emph{potential}, which we assume to be bounded
and piecewise continuous. Note that the value of a function or the
second derivative of a function at a point on the bond is well-defined,
thus it is not important which coordinate, $x_{b}$ or $x_{\hat{b}}$
is used. This is in contrast to the first derivative which changes
sign according to the direction of the chosen coordinate.

To complete the definition of the operator we need to specify its
domain.  We denote by $H^{2}(\Graph)$
the space
\begin{equation}
H^{2}(\Graph):=\bigoplus_{b=1}^B H^{2}([0,l_b])\ ,
\end{equation}
 which consists of the functions $f$ on $\Graph$ that on each bond $b$
belongs to the Sobolev space $H^{2}([0,l_b])$. The restriction of $f$ to
the bond $b$ is denoted by $f_{b}$, where $f_b(x_b)=f_{\hat{b}}(l_b-x_{{b}})$. The norm in the space $H^{2}(\Graph)$
is
\begin{equation}
\|f\|_{H^{2}(\Graph)}:=\sum\limits _{b=1}^B\|f_{b}\|_{H^{2}([0,l_b])}^{2} \ .
\end{equation}
Note that in the definition of $H^{2}(\Graph)$ the smoothness
is enforced along bonds only, without any matching conditions at the
vertices at all. However, the standard Sobolev trace theorem (see e.g.
\cite{EdmundsEvans_spectral87}) implies that each function $f_{b}$
and its first derivative have well-defined values at the endpoints
of the bond $b$.

All conditions matching functions at the vertices that lead to the operator (\ref{E:lap_op}) being self-adjoint
have been classified in \cite{Har_jpa00,KosSch_jpa99,Kuc_wav04}.
It can be shown that under these conditions the operator $\mathcal{H}$ is bounded
from below \cite{Kuc_wav04}. In addition, since we only consider compact
graphs, the spectrum is real, discrete and with no accumulation points.
Thus we can number the eigenvalues in the ascending order and denote
them with $\left\{ \lambda_{n}\right\} _{n=1}^{\infty}$. The $n$'th
eigenvalue obeys the following equation
\begin{equation}
-\frac{d^{2}f_{n}}{dx^{2}}+V(x)f_{n}(x)=\lambda_{n}f_{n}(x) \ ,\label{eq:eig_eq}
\end{equation}
where $f_{n}$ is the corresponding eigenfunction. We also
use $k_{n}$, such that $\lambda_{n}=k_{n}^{2}$, and say that $\left\{ k_{n}\right\} _{n=1}^{\infty}$
is the $k$-spectrum of the graph.

In this paper we will primarily consider
graphs where the domain of the operator (\ref{E:schrod}) consists
of the functions $f\in H^{2}(\Graph)$ which obey Neumann conditions
at all vertices of the graph.
The function $f$ is continuous at the vertex $v$, i.e.
$f_{b_{1}}(0)=f_{b_{2}}(0)$ for all bonds $b_{1},\, b_{2}\in\Bonds_{v}$.
The derivatives of $f$ at the vertex $v$ satisfy,
\begin{equation}
\sum_{b\in\Bonds_{v}}\frac{df_b}{dx_{b}}(0)=0 \ .\label{eq:delta_deriv}
\end{equation}
We will assume that $\Graph$ contains no loops (bonds $b$ with $o(b)=t(b)$) and no pairs of vertices connected by multiple bonds.  However, as a loop or bond can be subdivided by introducing a vertex of degree two with Neumann boundary conditions without changing the spectrum, this assumption does not reduce the generality of the results.

In addition, we will assume that the graph's spectrum
is non-negative, $\lambda_{1}\geq0$. This assumption holds in the case
of graphs with Neumann conditions with no potential, where $\lambda_{1}=0$.
The condition can be relaxed by accepting some additional technicalities, see \cite{p:KM04}.
The next section describes an approach to obtain the spectrum of such
a graph.

\subsection{The scattering approach to the graph spectrum}\label{sec:scat_mat}
Let $\Graph$ be a graph with Neumann vertex conditions and no potential. The eigenvalue equation,
\begin{equation}
-\frac{d^{2}f}{dx^{2}}=k^{2}f(x) \ ,\label{eq:eig_eq_V0}
\end{equation}
on each bond has a solution that can be written as a linear combination of
two complex exponentials if $k\neq0$. We write
\begin{equation}
f_{b}(x_{b})=a_{b}^{\textrm{in}}\ue^{-\rmi kx_{b}}+a_{b}^{\textrm{out}}\ue^{\rmi kx_{b}} \ ,\label{eq:sol_on_edge_a}
\end{equation}
 for the solution on the bond $b$.
%
The same function can be expressed using the coordinate $x_{\hat{b}}$
as
\begin{equation}
f_{\hat{b}}(x_{\hat{b}})=a_{\hat{b}}^{\textrm{in}}\ue^{-\rmi kx_{\hat{b}}}+a_{\hat{b}}^{\textrm{out}}\ue^{\rmi kx_{\hat{b}}} \ .\label{eq:sol_on_edge_other}
\end{equation}
 Since these two expressions should define the same function and since
the two coordinates are related, through the identity $x_{b}+x_{\hat{b}}=l_{b}$,
we arrive at the following relations
\begin{equation}
a_{b}^{\textrm{in}}=\rme^{\rmi kl_{b}}a_{\hat{b}}^{\textrm{out}}\qquad \textrm{and} \qquad a_{\hat{b}}^{\textrm{in}}=\rme^{\rmi kl_{b}}a_{b}^{\textrm{out}} \ .\label{eq:a_connection}
\end{equation}

Fixing a vertex $v$ and using the matching conditions to relate solutions
$f_{b}$ for all bonds $b$ with $o\left(b\right)=v$
one arrives at
\begin{equation}
\vec{a}_{v}^{\,\textrm{out}}=\sigma^{\left(v\right)}(k)\vec{a}_{v}^{\,\textrm{in}},\label{eq:vert_scat}
\end{equation}
 where $\vec{a}_{v}^{\,\textrm{out}}$ and $\vec{a}_{v}^{\,\textrm{in}}$ are vectors of
the outgoing and incoming coefficients at $v$ and $\sigma^{(v)}(k)$ is a
$d_{v}\times d_{v}$ unitary matrix, $d_{v}$ being the degree of
the vertex $v$. The matrix $\sigma^{(v)}(k)$ is called the vertex-scattering
matrix. It is $k$ independent for the Neumann vertex conditions and
its entries were calculated in \cite{p:KS99}:
\begin{equation}
\sigma_{b,b'}^{\left(v\right)}=\frac{2}{d_{v}}-\delta_{b,b'} \ .
\end{equation}

Collecting all coefficients $a_{b}^{\textrm{in}}$ on the whole graph into a vector $\vec{a}$
of size $2B$ such that the first $B$ entries
correspond to bonds which are the inverses of the last $B$ entries.
We can then define the matrix $J$ acting on $\vec{a}$ by requiring
that it exchanges $a_{b}^{\textrm{in}}$ and $a_{\hat{b}}^{\textrm{in}}$ for all $b$ such that,
\begin{equation}
J=\left(\begin{array}{cc}
0 & \UI\\
\UI & 0
\end{array}\right).
\end{equation}
Then, collecting equations (\ref{eq:vert_scat}) for all vertices into one system and
using (\ref{eq:a_connection}) we have,
\begin{equation}
J\rme^{-\rmi kL}\vec{a}=\Sigma(k)\vec{a} \ ,
\end{equation}
 where $L=\textrm{diag}\{l_1,\dots,l_B,l_1,\dots,l_B\}$ is a diagonal matrix of bond lengths and $\Sigma\left(k\right)$
is block-diagonalizable with individual $\sigma^{(v)}\left(k\right)$
as blocks. This can be rewritten as (note that $J^{-1}=J$),
\begin{equation}
\vec{a}=\rme^{\rmi kL}J\Sigma(k)\vec{a}\ .\label{eq:bond_scat}
\end{equation}
The unitary matrix $S(k):=J\Sigma(k)$ is called the \emph{bond-scattering}
matrix. The unitary matrix $U\left(k\right):=\rme^{\rmi kL}S(k)$
is the \emph{quantum evolution operator}. Using this notation we
write (\ref{eq:bond_scat}) as
\begin{equation}
\left(\UI-U\left(k\right)\right)\vec{a}=\vec{0} \  .\label{eq:secular_cond_tmp}
\end{equation}

A vector $\vec{a}$ satisfying equation (\ref{eq:secular_cond_tmp}) defines
an eigenfunction on the graph with the eigenvalue $E=k^{2}$. Hence we
arrive at the conclusion that the non zero eigenvalues of the graph
are the solutions of
\begin{equation}
\det\left(\UI-U\left(k\right)\right)=0 \ .\label{eq:secular_cond}
\end{equation}
 This is called the \emph{secular equation}. It is the major
tool for investigating various spectral properties of quantum graphs.
The interested reader can find a more elaborate description of the
derivation above in \cite{GnuSmi_aip06}. We end by noting that a
similar equation may be derived for the case of a potential which
differs from zero \cite{RueSmi_unpub}.

\section{A pseudo orbits expansion for the secular function}\label{sec:expansion_for_secular_function}
We define the \emph{secular function}
\begin{equation}
F\left(k\right):=\det\left(U\left(k\right)\right)^{-\frac{1}{2}}\det\left(\UI-U\left(k\right)\right).
\end{equation}
This is the LHS of the secular equation (\ref{eq:secular_cond}),
multiplied by $\det\left(U\left(k\right)\right)^{-\frac{1}{2}}$ to
turn it into a real function for all $k\in\Reals$, as $U\left( k \right)$ is unitary. We recall that all positive values
of the graph's $k$-spectrum are given by the zeros of $F\left(k\right)$.

The characteristic polynomial of $U\left(k\right)$ is,
\begin{equation}
F_{\xi}\left(k\right):=\det\left(\xi\UI-U\left(k\right)\right)=\sum_{n=0}^{2B}a_{n}\xi^{2B-n} \ ,\label{eq:characteristic_polynomial}
\end{equation}
where the $k$-dependence of the coefficients is omitted for brevity.
One can express the secular function in terms of the coefficients of the characteristic
polynomial
\begin{eqnarray}
F\left(k\right) & = & \det\left(U\left(k\right)\right)^{-\frac{1}{2}}F_{1}\left(k\right)\nonumber \\
 & = & a_{2B}^{-\frac{1}{2}}\sum_{n=0}^{2B}a_{n} \ .\label{eq:secular_fucnction1}
\end{eqnarray}
The main result of this section is a theorem which expands the secular
function as a sum over pseudo orbits on the graph. Before stating
the theorem, we describe orbits on graphs.

A \emph{periodic orbit} on $\Graph$ is a closed path that starts and ends at the same vertex.
We denote such a trajectory by the corresponding
ordered set of its bonds, $\gamma:=\left(b_{1},b_{2},\ldots b_{n}\right)$,
such that $\forall i\,,\,\, t\left(b_{i}\right)=o\left(b_{i+1}\right)$
and $t\left(b_{n}\right)=o\left(b_{1}\right)$.
Cyclic permutations of the bonds of $\gamma$ are the same periodic orbit,
for example, $\gamma=\left(b_{2},\ldots b_{n},b_{1}\right)$.

A \emph{primitive periodic orbit} is a periodic orbit which
cannot be written as a repetition of a shorter periodic
orbit. We can thus define the \emph{repetition number} of an orbit $r_{\gamma}$ as the
number of repetitions of the primitive periodic orbit which it
contains. For example, if $\gamma_{0}=\left(b_{1},b_{2},b_{3}\right)$
is a primitive periodic orbit, then $\gamma_{1}=\left(b_{1},b_{2},b_{3},b_{1},b_{2},b_{3}\right)$
is a periodic orbit with repetition number $r_{\gamma_1}=2$.
Note that the repetition number of a primitive periodic orbit is always one.
For a periodic orbit $\gamma$ we use $B_\gamma:=n$ to denote the number of bonds in $\gamma$, the \emph{topological length} of $\gamma$.  Similarly the \emph{metric length} (or length) of $\gamma$ is denoted $l_\gamma:=\sum_{b_j\in \gamma} l_{b_j}$.  The product of scattering amplitudes around $\gamma$ is;
\begin{equation}\label{eq:defn A_gamma}
    A_\gamma=S_{b_2 b_1} S_{b_3 b_2} \dots S_{b_{n} b_{n-1}}  S_{b_{1} b_n}\ ,
\end{equation}
where we have suppressed the $k$ dependence if present.

We now define a more complex structure which we call a pseudo orbit
(also called a composite orbit in the literature).
A \emph{pseudo orbit} is a collection of periodic orbits, $\po=\left\{ \gamma_{1},\gamma_{2},\ldots\gamma_{M}\right\}$.
An \emph{irreducible pseudo orbit} $\ppo$ is a pseudo orbit which contains
no directed bond more than once. So every periodic orbit $\gamma_j$ in $\ppo$ has at most one copy of each directed bond and every pair, $\gamma_i,\gamma_j$, has no directed bond in common. In particular this implies that an
irreducible pseudo orbit consists only of primitive periodic orbits
and the primitive periodic orbits are distinct.

We will use the following
notation for properties of $\po$:
\begin{eqnarray}
  m_{\po}&:= M &\textrm{ the number of periodic orbits in } \po . \\
  B_{\po}&:= \sum_{\gamma_j \in \po} B_{\gamma_j} &\textrm{ the total topological length of } \po .\\
  l_{\po}&:= \sum_{\gamma_j \in \po} l_{\gamma_j} &\textrm{ the total metric length of } \po .\\
  A_{\po}&:=  \prod_{\gamma_j \in \po} A_{\gamma_j} \ .
\end{eqnarray}
The null pseudo orbit $\overline{0}$ is the pseudo orbit which
contains no orbits. We treat it as an irreducible pseudo orbit
with the following properties: $m_{\overline{0}}=B_{\overline{0}}=l_{\overline{0}}=0$ and $A_{\overline{0}}=1$.

\begin{theorem}\label{thm:a_n_expansion}

The coefficients of the characteristic polynomial
$F_{\xi}\left(k\right)$ are given by
\begin{equation}
a_{n}=\sum_{\ppo |\, B_{\ppo}=n}\left(-1\right)^{m_{\ppo}}A_{\ppo}\left(k\right)\exp\left(\mathrm{i} kl_{\ppo}\right).\label{eq:a_n_expansion}
\end{equation}

\end{theorem}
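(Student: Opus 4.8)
The plan is to compute the characteristic polynomial $F_\xi(k) = \det(\xi\UI - U(k))$ directly from the classical expansion of a determinant as a sum over permutations, and then to reorganize that sum according to the cycle structure of the permutations. Recall that $U(k) = \rme^{\rmi k L} S(k)$, so that $U_{bb'}(k) = \rme^{\rmi k l_b} S_{bb'}(k)$. Writing $\det(\xi\UI - U) = \sum_{\pi \in \mathcal{S}_{2B}} \mathrm{sgn}(\pi) \prod_{b} (\xi\UI - U)_{b,\pi(b)}$ and expanding each factor $(\xi\UI - U)_{b,\pi(b)} = \xi\,\delta_{b,\pi(b)} - U_{b,\pi(b)}$, one collects, for each subset $D \subseteq \Bonds$ of size $n$ on which $\pi$ acts as the relevant permutation and fixes the complement, the coefficient of $\xi^{2B-n}$. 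This yields the standard formula $a_n = (-1)^n \sum_{|D|=n} \det(U|_D)$, where $U|_D$ is the principal submatrix of $U$ indexed by $D$. So the first step is to reduce the claim to a statement about principal minors of $U$.

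Next I would expand each principal minor $\det(U|_D) = \sum_{\pi \in \mathcal{S}_D} \mathrm{sgn}(\pi) \prod_{b \in D} U_{b,\pi(b)}$ and decompose each permutation $\pi$ of $D$ into its disjoint cycles $c_1, \dots, c_M$. A cycle $(b_1 \to b_2 \to \cdots \to b_p \to b_1)$ contributes $U_{b_1 b_2} U_{b_2 b_3} \cdots U_{b_p b_1}$; the matrix element $U_{b_i b_{i+1}} = S_{b_i b_{i+1}} \rme^{\rmi k l_{b_i}}$ is nonzero only when $t(b_{i+1}) = o(b_i)$ (by the block structure of $S = J\Sigma$), which is exactly the condition — after reading the cycle in the appropriate direction — that the bonds form a closed path, i.e. a periodic orbit $\gamma$. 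The product of the $S$-factors around the cycle is precisely $A_\gamma$ as defined in (\ref{eq:defn A_gamma}), and the product of the exponential factors is $\exp(\rmi k l_\gamma)$ since each $l_{b_i}$ appears once. The sign of a single $p$-cycle is $(-1)^{p-1}$, so a permutation with $M$ cycles of lengths $p_1, \dots, p_M$ summing to $n$ has sign $(-1)^{n-M}$. Collecting, $\det(U|_D) = \sum (-1)^{n-M} \prod_j A_{\gamma_j} \exp(\rmi k l_{\gamma_j})$, the sum being over ways to partition $D$ into cycles each forming a periodic orbit. Multiplying by the overall $(-1)^n$ from the first step cancels the $(-1)^n$ here and leaves $(-1)^{-M} = (-1)^M = (-1)^{m_{\ppo}}$.

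The final bookkeeping step is to check that summing over all $D$ with $|D|=n$ and over all admissible cycle decompositions of each $D$ is the same as summing over all irreducible pseudo orbits $\ppo$ with $B_{\ppo} = n$. This is where one must be slightly careful, and I expect it to be the main (though not deep) obstacle: one needs the correspondence between (an unordered set of disjoint directed-bond cycles covering $D$) and (an irreducible pseudo orbit), to be a genuine bijection. The key points are that (i) the bonds used must be \emph{directed} bonds and each appears at most once in $D$, which is exactly the irreducibility condition, forcing each $\gamma_j$ to be primitive and the $\gamma_j$ to be pairwise bond-disjoint; (ii) cyclic rotations of a cycle give the same periodic orbit, matching the convention that cyclic permutations of $\gamma$ are identified — and importantly a $p$-cycle as a permutation already has exactly $p$ such rotations as its distinct starting-point representations, so there is no spurious multiplicity; and (iii) the empty set $D = \emptyset$ contributes the single term $1$, matching the null pseudo orbit $\overline{0}$ with $m_{\overline 0}=0$, $A_{\overline 0}=1$, $l_{\overline 0}=0$, which accounts for $a_0 = 1$. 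Assembling (i)–(iii) gives exactly (\ref{eq:a_n_expansion}). I would also remark that the orientation subtlety — reading a cycle $b_1 \to \cdots \to b_p \to b_1$ versus the periodic-orbit ordering in (\ref{eq:defn A_gamma}) — is harmless because $A_\gamma$ and $l_\gamma$ are defined consistently with whichever reading makes the $S$-factors nonzero, and one simply fixes that convention once.
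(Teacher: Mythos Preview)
Your proof is correct and follows essentially the same route as the paper: expand $\det(\xi\UI-U)$ via the Leibniz permutation sum, decompose each permutation into disjoint cycles, observe that the block structure of $S$ forces each cycle to be a periodic orbit on the graph, and compute the sign $(-1)^{n-M}\cdot(-1)^n=(-1)^{m_{\ppo}}$. Your intermediate step through the principal-minor identity $a_n=(-1)^n\sum_{|D|=n}\det(U|_D)$ is just a repackaging of the same expansion (the paper carries the fixed points along rather than first peeling them off into the subset $D$), and your explicit bookkeeping of the bijection and the null pseudo orbit matches the paper's implicit treatment.
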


\begin{proof}

We first rewrite the characteristic polynomial using the permutation expansion
of the determinant
\begin{equation}
\det\left(\xi\UI-U\left(k\right)\right)=\sum_{\rho \in S_{2B}}\textrm{sgn}\left(\rho \right)\prod_{b=1}^{2B}\left[\xi\UI-U\left(k\right)\right]_{\rho \left(b\right),b},\label{eq:permutation_expansion}
\end{equation}
where the indices of the matrix, $\rho\left(b\right),b$, label directed bonds of $\CombinatorialGraph$. We may now use the unique presentation
of the permutations as a product of disjoint cycles
\begin{equation}
\rho=\left(b_{1},b_{2},\ldots b_{n_{1}}\right)\left(b_{n_{1}+1},\ldots, b_{n_{1}+n_{2}}\right)\ldots\left(b_{\sum_{j=1}^{m_{\rho}-1}n_{j}+1}, \ldots b{}_{\sum_{j=1}^{m_{\rho}}n_{j}}\right) \ ,\label{eq:cycle_representation}
\end{equation}
for a permutation $\rho$ with $m_{\rho}$ cycles where the $j$'th cycle has length $n_j$. The notation
above means $\rho$ acts on the $2B$ directed bonds such that
$\rho\left(b_{1}\right)=b_{2}$ etc.
 The representation
(\ref{eq:cycle_representation}) is unique up to cyclic permutations
of each cycle and no directed bond appears more than once.

Recall that $\left[U\left(k\right)\right]_{\rho\left(b\right),b}\neq0$
only if the directed bond $\rho\left(b\right)$ is connected to the directed
bond $b$, i.e., $o\left(\rho\left(b\right)\right)=t\left(b\right)$.\footnote[1]{Provided that all elements of the vertex scattering matrices are non-zero.  This would not hold, for example, in the case of equi-transmitting matrices \cite{p:HSW07} where back-scattering is prohibited.}
From which we conclude that a permutation $\rho\neq \textrm{I}$ can have a non
zero contribution to the sum (\ref{eq:permutation_expansion}) only
if $o\left(\rho\left(b\right)\right)=t\left(b\right)$ for all $b$ in $\rho$.
This allows us to interpret (\ref{eq:cycle_representation})
as an irreducible pseudo orbit on the graph. Each
periodic orbit in the pseudo orbit corresponds to a cycle of $\rho$.
The pseudo orbit is irreducible since no directed bond appears
more than once in (\ref{eq:cycle_representation}).  The fixed points of $\rho$, the bonds $b$ not in $\rho$ so $b=\rho\left(b\right)$,
each produce a multiplicative factor of $\xi$ in
the contribution of $\rho$ in (\ref{eq:permutation_expansion}).

The coefficient $a_{n}$ multiplies $\xi^{2B-n}$ and is therefore
obtained as a sum of all permutations $\rho\in S_{2B}$ with $2B-n$
fixed points. Each such permutation corresponds to an irreducible
pseudo orbit, $\ppo$, whose bond length is $B_{\ppo}=n$. Using the
notation of (\ref{eq:cycle_representation}), the pseudo orbit $\rho$ consists
of $m_{\rho}$ orbits and $n=\sum_{j=1}^{m_{\rho}}n_{j}$. The
contribution of this permutation (irreducible pseudo orbit) to (\ref{eq:permutation_expansion})
is given by
\[
\fl \textrm{sgn}\left(\rho\right)\xi^{2B-n}\left(-1\right)^{\sum_{j=1}^{m_{\rho}}n_{j}}\left(U_{b_{2}b_{1}}U_{b_{3}b_{2}}\ldots U_{b_{1}b_{n_{1}}}\right)\left(U_{b_{n_{1}+2}b_{n_{1}+1}}\ldots U_{b_{n_{1}+1}b_{n_{1}+n_{2}}}\right)\ldots,
\]
where we omitted the $k$ dependence from $U$ for clarity.
If we recall 
that $U_{b_{2}b_{1}}=\exp\left(\ui kl_{b_{2}}\right)S_{b_{2}b_{1}}$,
we see that the contribution above (up to a sign) is $\xi^{2B-n}\exp\left(\ui kl_{\ppo}\right)A_{\ppo}$.
The result (\ref{eq:a_n_expansion}) now follows since
\begin{eqnarray*}
\fl\textrm{sgn}\left(\rho\right)\left(-1\right)^{\sum_{j=1}^{m_{\rho}}n_{j}} & = & \left(-1\right)^{\sum_{j=1}^{m_{\rho}}\left(n_{j}+1\right)}\left(-1\right)^{\sum_{j=1}^{m_{\rho}}n_{j}}=\left(-1\right)^{m_{\rho}}=\left(-1\right)^{m_{\rho}}.
\end{eqnarray*}

\end{proof}

We now employ the unitarity of $U\left(k\right)$ to get a Riemann-Siegel lookalike formula which connects pairs of coefficients of the characteristic polynomial \cite{p:KS99}. The characteristic polynomial can be rewritten
\begin{eqnarray}
F_{\xi}\left(k\right) & = & \det\left(-\xi U\left(k\right)\right)\det\left(\xi^{-1}\UI-U^{\dagger}\left(k\right)\right) \nn \\
 & = & \det\left(-\xi U\left(k\right)\right)\sum_{n=0}^{2B}a_{n}^{*}\xi^{n-2B} \nn \\
 & = & \det\left(U\left(k\right)\right)\sum_{n=0}^{2B}a_{n}^{*}\xi^{n} \ .
\end{eqnarray}
 Comparing this to $F_{\xi}\left(k\right)=\sum_{n=0}^{2B}a_{n}\xi^{2B-n}$ and using
$a_{2B}=\det\left(U\left(k\right)\right)$, we deduce the following
symmetry relation of the characteristic polynomial coefficients
\begin{equation}
a_{n}=a_{2B}a_{2B-n}^{*} \ .\label{eq:symmetry_of_char_pol_coefficients}
\end{equation}
This symmetric expression allows us to express the secular function in terms of shorter pseudo orbits, whose total number of bonds is less than or equal to the number of bonds of the graph.


\begin{theorem}\label{thm:secular function}
The secular function has the following expansion in terms of irreducible pseudo orbits,
\begin{equation}\label{eq:Riemann-Siegel_expression}
F\left(k\right) = 2\sum_{\ppo | B_{\ppo} \leq B}\left(-1\right)^{m_{\ppo}}A_{\ppo}\cos\left(k\left(l_{\ppo}-\mathcal{L}\right)-\theta\left(k\right)\right)H\left(B-B_{\ppo}\right) \ ,
\end{equation}
where $\det S=\exp\left(\mathrm{i}2\theta\left(k\right)\right)$, and $H$ is the Heaviside function, see (\ref{eq:heavyside}).
\end{theorem}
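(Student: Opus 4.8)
The plan is to obtain (\ref{eq:Riemann-Siegel_expression}) as a resummation of the coefficient expansion of Theorem~\ref{thm:a_n_expansion}, using only the symmetry relation (\ref{eq:symmetry_of_char_pol_coefficients}) and unitarity of $U(k)$; no new combinatorial input is needed. Starting from (\ref{eq:secular_fucnction1}), $F(k)=a_{2B}^{-1/2}\sum_{n=0}^{2B}a_{n}$, I would first record that $a_{2B}=\det U(k)=\det(\rme^{\rmi kL})\det S=\exp(2\rmi k\mathcal{L})\exp(2\rmi\theta(k))$, using $\mathrm{tr}\,L=\sum_{b\in\Bonds}l_{b}=2\mathcal{L}$. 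Fixing the branch by $a_{2B}^{-1/2}:=\exp(-\rmi(k\mathcal{L}+\theta(k)))$ makes $a_{2B}^{-1/2}$ unimodular for real $k$ and gives $a_{2B}^{-1/2}a_{2B}=\overline{a_{2B}^{-1/2}}$; this is where the phase $-\theta(k)$ and the shift $l_{\ppo}-\mathcal{L}$ in the final formula come from.

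The second step is to fold the sum over $n$ about $n=B$. Write $\sum_{n=0}^{2B}a_{n}=\sum_{n=0}^{B-1}a_{n}+a_{B}+\sum_{n=B+1}^{2B}a_{n}$ and in the last sum substitute $n=2B-m$, then apply (\ref{eq:symmetry_of_char_pol_coefficients}) in the form $a_{2B-m}=a_{2B}a_{m}^{*}$ to obtain $\sum_{n=B+1}^{2B}a_{n}=a_{2B}\sum_{m=0}^{B-1}a_{m}^{*}$. Multiplying through by $a_{2B}^{-1/2}$, the two outer sums combine as $a_{2B}^{-1/2}\sum_{n=0}^{B-1}a_{n}+\overline{a_{2B}^{-1/2}\sum_{n=0}^{B-1}a_{n}}=2\sum_{n=0}^{B-1}\Re(a_{2B}^{-1/2}a_{n})$. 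For the leftover middle term, (\ref{eq:symmetry_of_char_pol_coefficients}) with $n=B$ reads $a_{B}=a_{2B}a_{B}^{*}$, so $a_{2B}^{-1/2}a_{B}=\overline{a_{2B}^{-1/2}a_{B}}$ is real and equals $\Re(a_{2B}^{-1/2}a_{B})$. Hence $F(k)=2\sum_{n=0}^{B}H(B-n)\,\Re(a_{2B}^{-1/2}a_{n})$ with $H(0)=\frac{1}{2}$; the half-weight on the self-paired term $n=B$ is exactly the role of the factor $H(B-B_{\ppo})$ in (\ref{eq:Riemann-Siegel_expression}).

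The last step is to substitute the pseudo orbit expansion (\ref{eq:a_n_expansion}) for each $a_{n}$. With the chosen branch, $a_{2B}^{-1/2}a_{n}=\sum_{\ppo\,|\,B_{\ppo}=n}(-1)^{m_{\ppo}}A_{\ppo}(k)\exp(\rmi k(l_{\ppo}-\mathcal{L})-\rmi\theta(k))$. Since the Neumann vertex-scattering matrices $\sigma^{(v)}_{b,b'}=2/d_{v}-\delta_{b,b'}$ are real, $S$ and every amplitude $A_{\ppo}$ are real, so $\Re(A_{\ppo}\exp(\rmi\phi))=A_{\ppo}\cos\phi$. Summing over $n=0,\dots,B$ and combining into a single sum over irreducible pseudo orbits with $B_{\ppo}\le B$, with the constraint $B_{\ppo}=n$ absorbed into the Heaviside weight $H(B-B_{\ppo})$, yields (\ref{eq:Riemann-Siegel_expression}).

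I do not expect a genuine obstacle: this proposition is essentially a corollary of Theorem~\ref{thm:a_n_expansion}. The points that need care are (i) the branch and unimodularity bookkeeping for $a_{2B}^{\pm1/2}$, so that the two half-sums are honest complex conjugates and $\theta(k)$ is defined compatibly with $\det(U(k))^{-1/2}$ in $F$; (ii) the correct half-weighting of the self-paired term $n=B$, which is what produces the Heaviside function rather than a uniform factor of $2$; and (iii) the fact that reality of $S$ is what converts the real parts into cosines, so that for non-Neumann scattering matrices one still obtains a finite expansion of $F$ over the short irreducible pseudo orbits, though not literally in the cosine form displayed.
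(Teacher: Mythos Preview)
Your proposal is correct and follows essentially the same route as the paper: fold $\sum_{n=0}^{2B}a_{n}$ about $n=B$ using the symmetry relation (\ref{eq:symmetry_of_char_pol_coefficients}), identify the two half-sums as complex conjugates after multiplying by $a_{2B}^{-1/2}$, and then insert the irreducible pseudo orbit expansion of Theorem~\ref{thm:a_n_expansion}. Your treatment is in fact slightly more explicit than the paper's on two points it leaves implicit: the branch and unimodularity bookkeeping for $a_{2B}^{-1/2}=\exp(-\rmi(k\mathcal{L}+\theta(k)))$, and the observation that the passage from $\Re(A_{\ppo}\,\rme^{\rmi\phi})$ to $A_{\ppo}\cos\phi$ uses the reality of $A_{\ppo}$ (guaranteed here by the real Neumann scattering matrices).
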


\begin{proof}

Plugging the symmetry relation (\ref{eq:symmetry_of_char_pol_coefficients}) in the secular function expression (\ref{eq:secular_fucnction1}) we obtain,
\begin{eqnarray}\label{eq:Riemann Siegel}
F\left(k\right) & = & \det\left(U\left(k\right)\right)^{-\frac{1}{2}}\sum_{n=0}^{2B}a_{n}\nonumber \\
 & = & \sum_{n=0}^{B-1}\left(a_{2B}^{-\frac{1}{2}}a_{n}+a_{2B}^{\frac{1}{2}}a_{n}^{*}\right)+a_{2B}^{-\frac{1}{2}}a_{B}\nonumber \\
 & =\Re & \left\{ 2\sum_{n=0}^{B-1}a_{2B}^{-\frac{1}{2}}a_{n}+a_{2B}^{-\frac{1}{2}}a_{B}\right\} \ ,
\end{eqnarray}
where $a_{2B}^{-1/2}a_B$ is a real number which can be seen from the symmetry relation as $|a_{2B}|=1$.
Combining this with theorem \ref{thm:a_n_expansion} we see
\begin{eqnarray}
 F\left(k\right)
  &= & \Re\left\{ 2\left(\det S\right)^{-\frac{1}{2}}\sum_{n=0}^{B-1}\sum_{\ppo| \, B_{\ppo}=n}\left(-1\right)^{m_{\ppo}}A_{\ppo}\exp\left(\ui k\left(l_{\ppo}-\mathcal{L}\right)\right) \right.
\nn \\
 &&\left. +\left(\det S\right)^{-\frac{1}{2}}\sum_{\ppo | B_{\ppo}=B}\left(-1\right)^{m_{\ppo}}A_{\ppo}\exp\left(\ui k\left(l_{\ppo}-\mathcal{L}\right)\right)\right\} .\label{eq:secular_function3}
\end{eqnarray}
The unitarity of $S$ allows to denote $\det S=\exp\left(\ui 2\theta\left(k\right)\right)$
and to rewrite the above as
\begin{eqnarray}
 F\left(k\right)  = & 2\sum_{n=0}^{B-1}\sum_{\ppo |\, B_{\ppo}=n}\left(-1\right)^{m_{\ppo}}A_{\ppo}\cos\left(k\left(l_{\ppo}-\mathcal{L}\right)-\theta\left(k\right)\right)\nn \\ &+\sum_{\ppo |\, B_{\ppo}=B}\left(-1\right)^{m_{\ppo}}A_{\ppo}\cos\left(k\left(l_{\ppo}-\mathcal{L}\right)-\theta\left(k\right)\right).
\end{eqnarray}
This expression is made more compact by using the Heaviside
step function
\begin{equation}\label{eq:heavyside}
    H\left(x\right)=\left\{ \begin{array}{cc}
0 & x<0\\
\frac{1}{2} & x=0\\
1 & x>0
\end{array}\right. \ .
\end{equation}


\end{proof}

Finally, if we make the assumption that our graph has Neumann conditions at the vertices then either $\theta=0$ or $\theta=\frac{\pi}{2}$ with no $k$-dependence.
This results in an expansion in terms of either cosine or sine functions.
In the remainder of the article we will assume that the graph has Neumann vertex conditions.

We also note that the point of truncation $B_{\ppo}\leq B$ in (\ref{eq:Riemann-Siegel_expression}) is analogous to a cutoff at half of the Heisenberg time.  The Heisenberg time $T_H$ is $2\pi$ multiplied by the mean spectral density, which for a graph is $\mathcal{L}/\pi$, hence $T_H=2\mathcal{L}$.  If $l_{\textrm{av}}$ is the average length of a bond on our graph, and we assume that the bond lengths are all close to $l_{\textrm{av}}$, then $l_{\ppo}\approx l_{\textrm{av}} B_{\ppo}$ and $T_H/2= l_{\textrm{av}} B$.  Thus the cutoff $B_{\ppo}\leq B$ resembles $l_{\ppo}\leq T_H/2$ used in the semiclassical expansions \cite{BerKea_jpa90,BerKea_prsl92,Kea_prsl92}.  This in turn originated with the cutoff introduced in the functional equation of the Riemann zeta function, known as the Riemann-Siegel formula.
However, the argument leading to the cutoff in (\ref{eq:Riemann-Siegel_expression}) is due to the unitarity of $U(k)$.  A similar approach applies to the dynamical zeta function \cite{Bog_chaos92,Bog_non92}.


\section{Variance of coefficients of the characteristic polynomial}\label{sec:coefficients_of_char_poly}
As a first application of these results we are now in a position to write a simple expansion for the variance of coefficients of the characteristic polynomial (\ref{eq:a_n_expansion}).
The mean values of the coefficients (other than $a_{0}=1$) are zero as the average over $k$ of $ \ue^{\ui k l_{\ppo}}$ is necessarily zero for pseudo orbits of topological length $n \geq 1$.
The second moment of the coefficients of the characteristic polynomial, the \emph{variance}, was investigated numerically in \cite{p:KS99} where it is seen that they may deviate from the random matrix predictions even for graphs whose spectral-statistics match those of appropriate random matrix distributions. Tanner \cite{p:Tan02} considered the variance of the coefficients for a related graph model which generates a unitary stochastic matrix ensemble \cite{p:Tan01} over which the average is taken.

From theorem \ref{thm:a_n_expansion} the variance of the coefficients of the characteristic polynomial can be expressed as a sum over pairs of irreducible pseudo orbits $\ppo, \ppo '$,
\begin{eqnarray}
  \langle |a_n|^2 \rangle_k &=
  \sum_{\ppo, \ppo' | B_{\ppo} = B_{\ppo '} =n} (-1)^{m_{\ppo}+m_{\ppo '}} A_{\ppo} A_{\ppo '}
  \lim_{K\to \infty} \frac{1}{K} \int_0^K \ue^{\ui k (l_{\ppo}-l_{\ppo'})} \ud k  \nn \\
   &=  \sum_{\ppo, \ppo' | B_{\ppo} = B_{\ppo '} =n} (-1)^{m_{\ppo}+m_{\ppo '}} A_{\ppo} A_{\ppo '} \,
   \delta_{l_{\ppo},l_{\ppo '}} \ . \label{eq:a_n variance}
\end{eqnarray}
So a pair of irreducible pseudo orbits of topological length $n$ only contributes to the sum when the total metric lengths of the pseudo orbits are equal.
When the set of bond lengths is incommensurate this implies that the two irreducible pseudo orbits visit a given pair of directed bonds $\{b,\hat{b}\}$ the same number of times. Hence the variance is independent of the bond lengths, provided that they are incommensurate.
The formulation of the variance (\ref{eq:a_n variance}) also holds for vertex scattering matrices that do not correspond to Neumann matching conditions provided that the scattering matrices are independent of $k$.

\subsection{Diagonal approximation}
In the case of incommensurate bond lengths the first order contribution to the variance will come from pairing an irreducible pseudo orbit $\ppo$ with itself, $\ppo '=\ppo$, or with any other orbit $\ppo'$ obtained from $\ppo$ by reversing the direction of a subset of orbits in the pseudo orbit, as in the diagonal approximation of Berry \cite{p:B85}.  As $\ppo$ contains $m_{\ppo}$ distinct orbits there are typically $2^{m_{\ppo}}$ different pseudo orbits that can be obtained from $\ppo$ by reversing a subset of the orbits in $\ppo$.  Making this diagonal approximation we obtain
\begin{equation}\label{eq:a_n variance diag}
    \langle |a_n|^2 \rangle_k \approx \sum_{\ppo | B_{\ppo} =n} 2^{m_{\ppo}} A_{\ppo }^2 \ .
\end{equation}
Note that while making the diagonal approximation, we did not take into account the existence of self retracing orbits. These orbits are equal to their reversed orbit. Pseudo orbits which contain self retracing orbits should therefore have a smaller prefactor in (\ref{eq:a_n variance diag}). However, the proportion of self retracing orbits among all periodic orbits decreases rapidly with their length, which justifies the approximation.

As $A_{\ppo}^2\leq 1$ this suggests that the dominant contribution in the diagonal approximation will come from pseudo orbits comprised of many distinct periodic orbits where $m_{\ppo}$ is maximal.
Figures \ref{fig:variance} and \ref{fig:variance2} show the variance of the coefficients of the characteristic polynomial for the two graphs shown in figure \ref{fig:fcs} computed from the pseudo orbit expansion (\ref{eq:a_n variance}), the diagonal approximation (\ref{eq:a_n variance diag}) and by directly averaging coefficients of $F_{\xi}\left(k\right)$ over a range of approximately $2000$ mean level spacings.
The diagonal approximation works well for these examples.  One can also see that where the diagonal approximation fails, around $n/2B=1/2$, the diagonal approximation over estimates the actual value which suggests that for these small graphs the over counting of the self-retracing orbits is significant.

%
%


\begin{figure}[!ht]
  \begin{center}
  \setlength{\unitlength}{1cm}
    \begin{picture}(12,7.5)
    \put(0.2,0){\includegraphics[width=10cm]{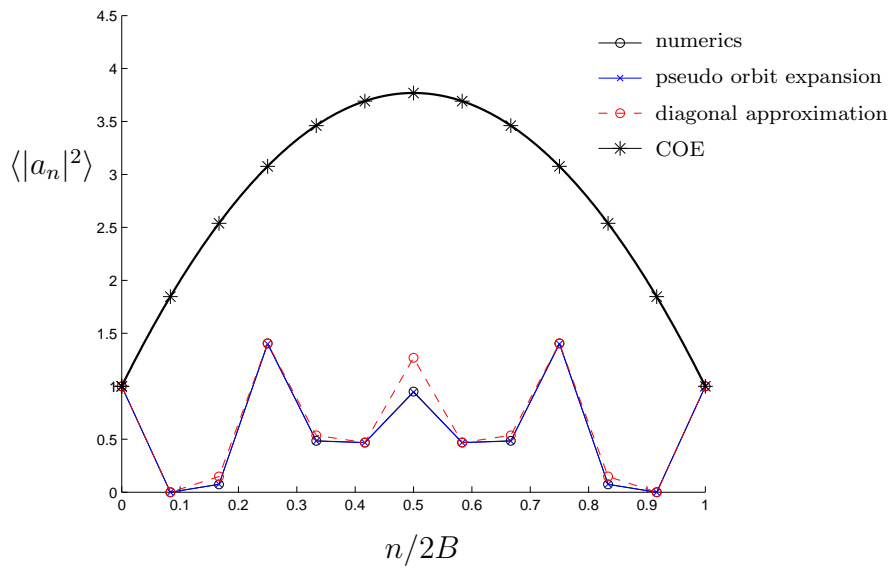}}
    \put(5,0){$n/2B$}
    \put(0,5.1){$\langle |a_n|^2 \rangle$}
    \put(8.6,6.79){\scriptsize numerics}
    \put(8.6,6.31){\scriptsize pseudo orbit expansion}
    \put(8.6,5.83){\scriptsize diagonal approximation}
    \put(8.6,5.35){\scriptsize COE}
    \end{picture}
  \caption{\it Variance of coefficients of the characteristic polynomial of a fully connected graph with four vertices (shown in figure \ref{fig:fcs}(a)) with random bond lengths chosen uniformly in $[0,1]$.  Numerical results were averaged over $200\,000$ values of $k$ over a range of approximately $2000$ mean level spacings and are compared with the pseudo orbit expansion, diagonal approximation and random matrix theory.}
  \label{fig:variance}
  \end{center}
\end{figure}

\begin{figure}[!ht]
  \begin{center}
  \setlength{\unitlength}{1cm}
    \begin{picture}(12,7.5)
    \put(0.2,0){\includegraphics[width=10cm]{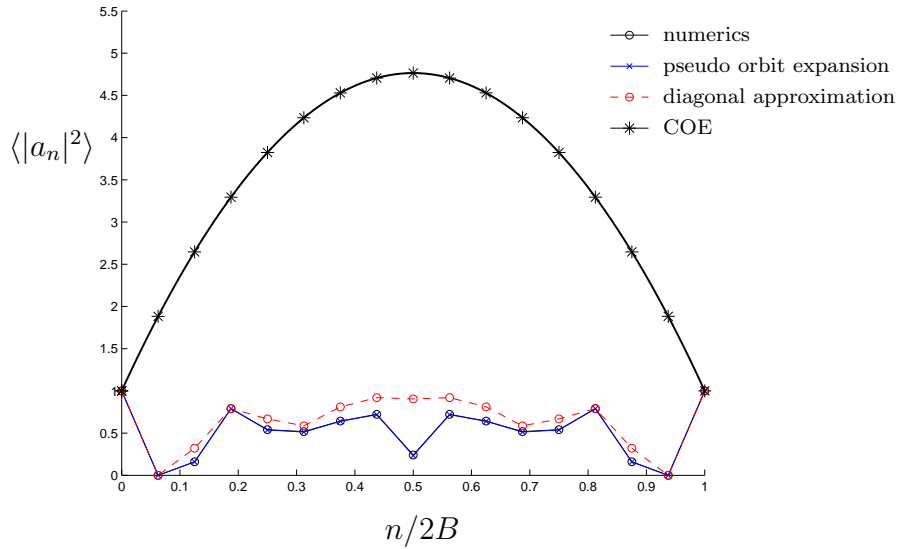}}
    \put(5,0){$n/2B$}
    \put(0,5.1){$\langle |a_n|^2 \rangle$}
    \put(8.7,6.62){\scriptsize numerics}
    \put(8.7,6.21){\scriptsize pseudo orbit expansion}
    \put(8.7,5.8){\scriptsize diagonal approximation}
    \put(8.7,5.39){\scriptsize COE}
    \end{picture}
  \caption{\it Variance of coefficients of the characteristic polynomial for the graph shown in figure \ref{fig:fcs}(b) with random bond lengths chosen uniformly in $[0,1]$.  Numerical results were averaged over $200\,000$ values of $k$ over a range of approximately $1700$ mean level spacings and are compared with the pseudo orbit expansion, diagonal approximation and random matrix theory.}
  \label{fig:variance2}
  \end{center}
\end{figure}

The variance of the coefficients of the characteristic polynomial of a $2B\times 2B$ random matrix were obtained in \cite{p:Hetal96}.
For random matrices from the Circular Orthogonal Ensemble (COE) and Circular Unitary Ensemble (CUE) ($\beta=1$ and $\beta=2$ respectively) the variance of the coefficients are given by,
\begin{equation}\label{eq:RMT coeffs of char poly}
    \langle |a_n|^2 \rangle_\beta = \left\{ \begin{array}{ll}
    1 +\frac{n(2B-n)}{2B+1} & \qquad \beta =1 \\
    1& \qquad \beta=2 \\
    \end{array}
    \right. \ .
\end{equation}
%

In the semiclassical limit the spectrum of generic quantum graphs is seen to approach that of an appropriate random matrix ensembles.  This goes back to the results of Kottos and Smilansky for quantum graphs \cite{p:KS99} which is the graph setting of the well known correspondence of Bohigas, Giannoni and Schmit \cite{p:BGS84}.  The semiclassical limit is the limit of increasing spectral density, which for quantum graphs is the limit of a sequence of graphs with increasing number of bonds \cite{GnuSmi_aip06}.  In \cite{p:BW10} Berkolaiko and Winn formalize the connection between the spectrum of a scattering matrix and the spectrum of the corresponding Laplace operator on the graph.  Of particular relevance to the current discussion, when the spectrum of the graph is distributed like the eigenvalues of a random matrix we also expect that the eigenphases of the scattering matrix on the unit circle will have the same distribution.  When this is the case one must also anticipate that the coefficients of the characteristic polynomial of the graph are distributed according to the random matrix prediction.  As this was not observed it appears that the variance of the coefficients of the characteristic polynomial is more sensitive to the size of the graph used for the numerics than the spectrum itself.
In \cite{p:Tan02} Tanner investigated sequences of binary graphs without time-reversal symmetry where he observes converges to a constant different from the random matrix result and dependent on the classical limit of an underlying Markov process.

\section{Zeta function}\label{sec:zeta}
Let $z=k+\ui t$ with $k,t$ real, following theorem \ref{thm:secular function} we will denote
\begin{equation}\label{eq:defn F}
    F(z)=2 \sum_{\ppo | B_{\ppo} \leq B} (-1)^{m_{\ppo}} A_{\ppo} \cos(z(l_{\ppo}-\gL)-\theta) \, H(B-B_{\ppo})
\end{equation}
so that the secular equation reads $F(z)=0$.   $F(k)$ is an even or odd function of $k$ depending on whether $\theta$ is $0$ or $\pi/2$, so we have a sum of either cosine or sine functions respectively.
Notice that $F(k)$ must be even or odd as the $k$-spectrum is symmetric about zero.
For Neumann vertex conditions in the limit $z\to 0$,
\begin{equation}\label{eq:F z->0}
    F(z)\underset{z\to 0}{\sim} z^{B-V+2} + O( z^{B-V+4}) \ ,
\end{equation}
which comes from the identity $\textrm{dim} \{ \ker (\UI-U(0))\}=B-V+2$ \cite{p:K08}.
We define
\begin{equation}\label{eq:hatF}
    \hat{F}(z)=\frac{F(z)}{z^{B-V+2}} \ .
\end{equation}
Importantly $\hat{F}$ also has roots $k_n$ but does not vanish at zero, $\hat{F}(z)\sim 1+O(z^2)$ as $z\to 0$.  We see that $\hat{F}(-z)=\hat{F}(z)$. To formulate the spectral zeta function we follow a procedure introduced for graphs in \cite{p:HK11} which we only sketch here.  We take the function $\hat{F}(z)$ and use the argument principle, evaluating an integral around a contour $c$ enclosing the positive real axis, see figure \ref{fig:contours}(a).
\begin{equation}\label{eq:contour int}
    \zeta(s,\lambda)=\frac{1}{2\pi \ui} \int_c (z^2+\lambda)^{-s} \frac \ud {\ud z} \log \hat{F}(z) \,  \ud z \ ,
\end{equation}
which converges for $\re s > 1$.

\begin{figure}[!ht]
  \begin{center}
  \setlength{\unitlength}{1cm}
    \begin{picture}(13,5)
    \put(0,0){\includegraphics[width=6cm]{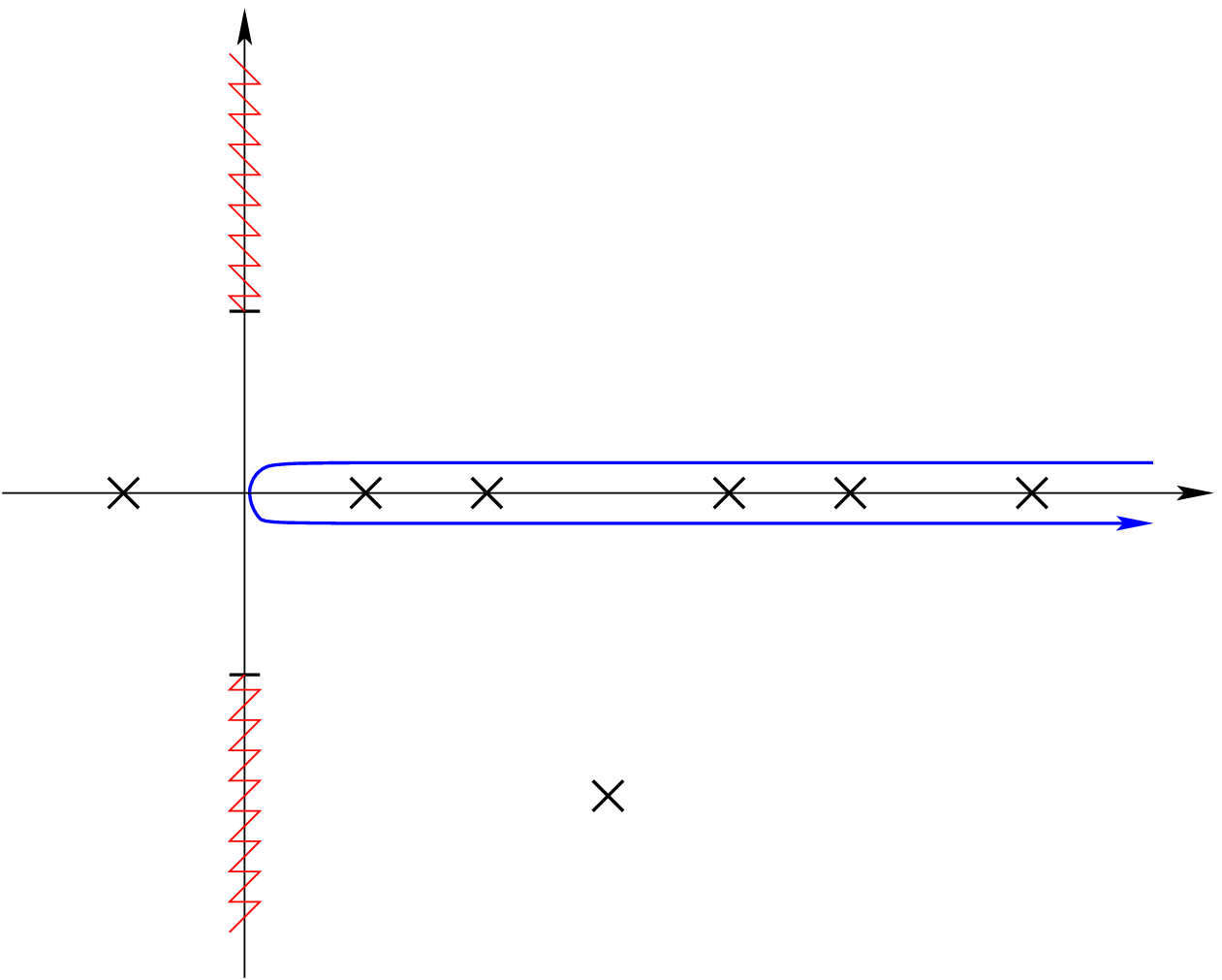}}
    \put(7,0){\includegraphics[width=6cm]{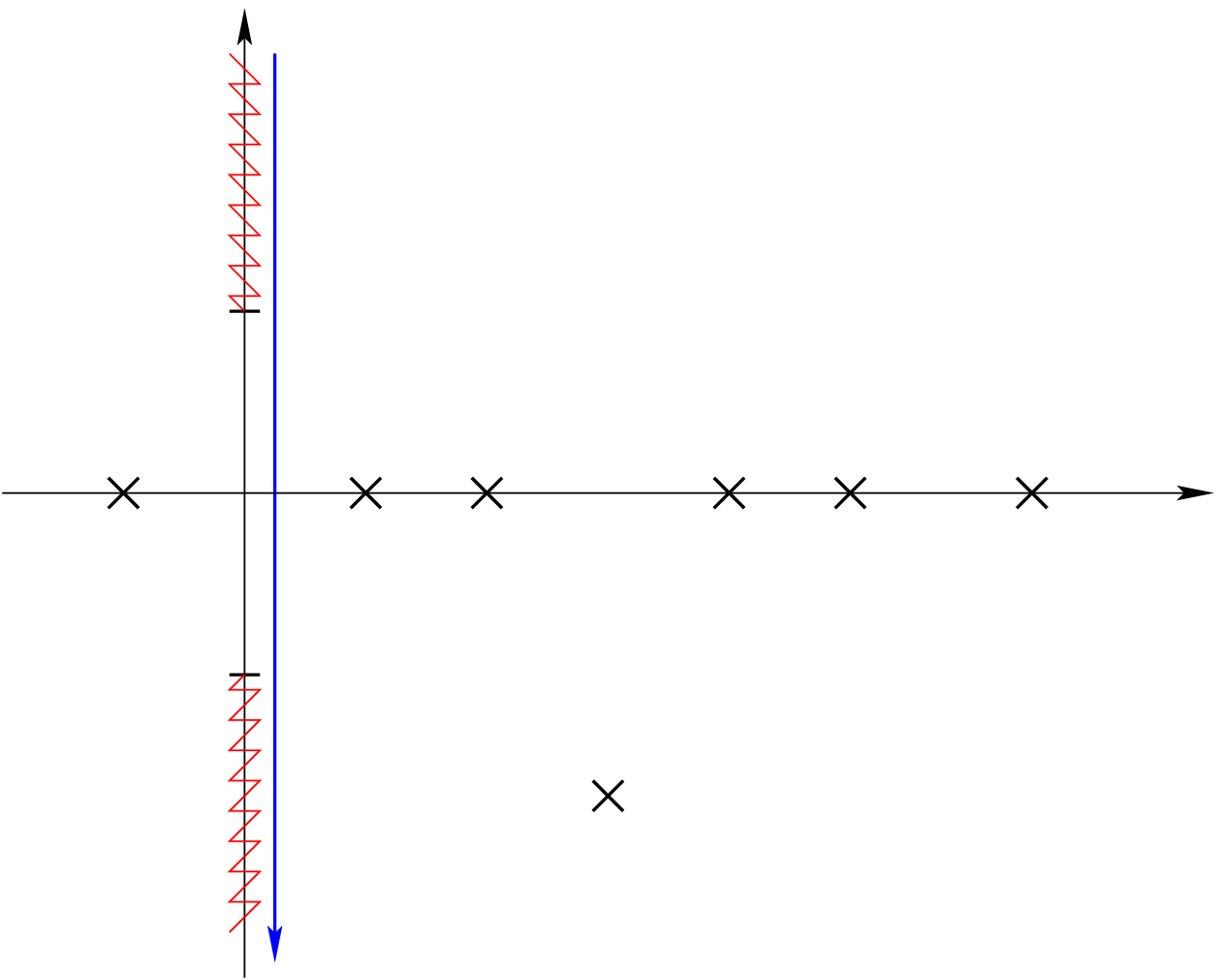}}
    \put(0,4.8){(a)}
    \put(7,4.8){(b)}
    \put(3.3,0.8){zero of $\hat{F}(z)$}
    \put(10.3,0.8){zero of $\hat{F}(z)$}
    \put(-0.1,1.4){$-\ui \sqrt{\lambda}$}
    \put(0.2,3.2){$\ui \sqrt{\lambda}$}
    \put(5.7,1.9){$k$}
    \put(0.8,4.5){$\ui t$}
    \put(12.7,1.9){$k$}
    \put(7.8,4.5){$\ui t$}
    \put(5,1.8){$c$}
    \put(8.6,0.5){$c'$}
    \end{picture}
  \caption{\it Contours used to formulate $\zeta(s,\lambda)$
    (a) before, and (b) after, the contour transformation.
    Branch cuts are represented by hashed lines.}
  \label{fig:contours}
  \end{center}
\end{figure}

We may deform the contour $c$ to $c'$ where the integral is along the imaginary axis, figure \ref{fig:contours}(b).
\begin{equation}\label{eq:zeta c'}
    \zeta(s,\lambda) = \frac{\sin \pi s}{\pi} \int_{\sqrt{\lambda}}^{\infty} (t^{2}-\lambda)^{-s} \frac{\ud}{\ud t}
    \log \hat{F}(\ui t) \,  \ud t
\end{equation}
The segment between $\ui \sqrt{\lambda}$ and $-\ui \sqrt{\lambda}$ does not contribute to the integral due to the symmetry of  $\hat{F}(z)$.  Analysis of the integral shows that the representation of the zeta function (\ref{eq:zeta c'}) converges in the strip $1/2<\re s <1$.  As we will see, the restriction $\re s>1/2$ is due to the $t\to \infty$ behavior of $\hat{F}(\ui t)$.
\begin{equation}\label{eq:F t->inf a}
    \hat{F}(\ui t) =
    \frac{\pm 1}{t^{B-V+2}} \sum_{\ppo | B_{\ppo} \leq B} (-1)^{m_{\ppo}} A_{\ppo} \, (\ue^{t(\gL-l_{\ppo})} \pm \ue^{t(l_{\ppo}-\gL)}) \, H(B-B_{\ppo}) \ ,
\end{equation}
where the choice of sign for $\ue^{t(l_{\ppo}-\gL)}$ depends on whether $\theta=0$ or $\pi/2$.\footnote[1]{The sign of the whole function is not important as multiplying $\hat{F}$ by a constant factor does not change its properties.}
As no directed bond appears in $\ppo$ more than once $l_{\ppo}<\gL$.
Hence
\begin{equation}\label{eq:F t->inf}
    \hat{F}(\ui t) \underset{t\to \infty}{\sim}
    \frac{\ue^{t\gL}}{t^{B-V+2}} \ ,
\end{equation}
which is the contribution of the null pseudo orbit $\overline{0}$; the pseudo orbit of length zero.
Consequently,
\begin{equation}\label{eq:log F t->inf}
   \frac{\ud}{\ud t} \log \hat{F}(\ui t) \underset{t\to \infty}{\sim}
    \gL-\frac{B-V+2}{t} \ ,
\end{equation}
and we see that (\ref{eq:zeta c'}) converges for $\re s>1/2$.  To make the analytic continuation of the zeta function to the left of the line $\re s =1/2$ we add and subtract the $t\to \infty$ asymptotics in the integral.  We thus obtain an expression for $\zeta(s,\lambda)$ valid for $\re s<1$,
\begin{eqnarray}\label{eq:zeta reg}
    \zeta(s,\lambda) = & \frac{\sin \pi s}{\pi} \int_{\sqrt{\lambda}}^{\infty} (t^{2}-\lambda)^{-s} \frac{\ud}{\ud t}
    \log (\hat{F}(\ui t)t^{B-V+2} \ue^{-t\gL}) \,  \ud t
    \nn \\
    & + \gL \frac{\Gamma(s-1/2)}{2\sqrt\pi \Gamma(s)}\lambda^{\frac12-s}
    -\frac{B-V+2}{2} \lambda^{-s} \ .
\end{eqnarray}

\section{Spectral determinants and vacuum energy}\label{sec:trace}
We are now in a position to use the zeta function to formulate the spectral determinant and vacuum energy of the graph as sums over a finite set of pseudo orbits.
\subsection{Spectral determinant}\label{sec:det}
The spectral determinant, formally the product of the shifted eigenvalues (\ref{eq:formal spec det}), was defined as $\gS(\lambda)=\exp(-\zeta'(0,\lambda))$, see \cite{p:HK11}.
From (\ref{eq:zeta reg}) we see
\begin{eqnarray}\label{eq:zeta'}
\fl    \zeta'(0,\lambda) &=  \left[ \log (\hat{F}(\ui t)t^{B-V+2}\ue^{-t\gL}) \right]^{\infty}_{\sqrt{\lambda}} - \gL \sqrt{\lambda}
    +\frac{B-V+2}{2} \log \lambda
    \nn \\
\fl    & =\log A_{\overline{0}}-\log (\hat{F}(\ui \sqrt{\lambda})\lambda^{\frac{B-V+2}{2}}\ue^{-\sqrt{\lambda}\gL}) - \gL \sqrt{\lambda}+\frac{B-V+2}{2} \log \lambda \ .
\end{eqnarray}
Hence the spectral determinant is given as
\begin{equation}\label{eq:graph det}
 \fl   \gS(\lambda)
     = \frac{2}{\lambda^{(B-V+2)/2}}
     \sum_{\ppo | B_{\ppo} \leq B} (-1)^{m_{\ppo}} A_{\ppo} \cosh(\sqrt{\lambda}(l_{\ppo}-\gL)) \, H(B-B_{\ppo}) \ ,
\end{equation}
when $\theta=0$ or the same formula with $\cosh$ replaced by $\sinh$ when $\theta=\pi/2$.
This concisely formulates the spectral determinant of the graph in terms of a finite set of dynamical quantities.

\subsection{Vacuum energy}\label{sec:vac}
A second application of the pseudo orbit approach is the zeta-regularized vacuum energy.  The vacuum energy, formally the sum of the square roots of the eigenvalues (\ref{eq:formal Ec}), is defined as
\begin{equation}\label{eq:defn Ec}
    \E_c=\frac{1}{2} \zeta ( -1/2,0 ) \ .
\end{equation}
Were $\zeta(s,0)$ to have pole at $s=-1/2$ a more general definition would be required, see e.g.
\cite{b:Kir:SpectralFunctions}; however, we will see that there is no pole in the present case.
%
%
The Casimir force on a bond $b$ is given by $-\frac{\partial }{\partial L_b} \mathcal{E}_c$.
%
%

The zeta function for $\lambda=0$ is,
\begin{equation}\label{eq:zeta c' E=0}
    \zeta(s,0) = \frac{\sin \pi s}{\pi} \int_{0}^{\infty} t^{-2s} \frac{\ud}{\ud t}
    \log \hat{F}(\ui t) \,  \ud t \ ,
\end{equation}
in the strip $1/2<\re s<1$.  Splitting the integral at $t=1$
\begin{equation}\label{eq:zeta c' E=0 split}
    \zeta(s,0) = \frac{\sin \pi s}{\pi} \left( \int_{0}^{1} t^{-2s} \frac{\ud}{\ud t}
    \log \hat{F}(\ui t) \,  \ud t + \int_{1}^{\infty} t^{-2s} \frac{\ud}{\ud t}
    \log \hat{F}(\ui t) \,  \ud t \right) \ ,
\end{equation}
and the restriction to $\re s>1/2$ comes from the $t\to \infty$ behavior in the second integral.
Adding and subtracting the asymptotic behavior (\ref{eq:F t->inf}) in the second integral (as in (\ref{eq:zeta reg})) we obtain,
\begin{eqnarray}\label{eq:zeta c' E=0 2}
\fl    \zeta(s,0) =& \frac{\sin \pi s}{\pi} \int_{0}^{1} t^{-2s} \frac{\ud}{\ud t}
    \log \hat{F}(\ui t) \,  \ud t  + \frac{\sin \pi s}{\pi} \int_{1}^{\infty} t^{-2s} \frac{\ud}{\ud t}
    \log (\hat{F}(\ui t) t^{B-V+2} \ue^{-t\gL}) \,  \ud t
    \nn \\
\fl    & + \frac{\sin \pi s}{\pi} \left( \frac{\gL}{2s-1} +\frac{V-B-2}{2s} \right)\ ,
\end{eqnarray}
which is now valid for $\re s<1$.  Using this representation we obtain
\begin{eqnarray}\label{eq:FP zeta(-1/2,0)}
\fl   \mathcal{E}_c =  &-\frac{1}{2\pi} \int_{0}^{1} t \frac{\ud}{\ud t}
    \log \hat{F}(\ui t) \,  \ud t  +\frac{\gL}{4\pi} +\frac{V-B-2}{2\pi}\nn  \\
\fl    & +\frac{1}{2\pi} \sum_{\ppo|B_{\ppo} \leq B} (-1)^{m_{\ppo}} A_{\ppo}
    \int_1^\infty \frac{ tl_{\ppo} \ue^{-tl_{\ppo}} - t(2\gL-l_{\ppo}) \ue^{-t(2\gL-l_{\ppo})}}{
    \sum_{\ppo|B_{\ppo} \leq B} (-1)^{m_{\ppo}} A_{\ppo}
    (\ue^{-tl_{\ppo}} +\ue^{-t(2\gL-l_{\ppo})})}\, \ud t\ .
\end{eqnarray}
This is an exact trace formula for the vacuum energy where the lengths of orbits included in the sum are truncated at a finite value.  For a particular graph the integrals can be evaluated numerically.

\section{Conclusions}\label{sec:conclusions}
We have investigated a pseudo orbit approach to spectral questions on quantum graphs.  The possibility of a pseudo orbit expansion
of the secular function of a graph was described in \cite{p:KS99}.  Here, however, we have presented the explicit formula.  As quantum graphs provide a system where the spectral analysis can be carried out exactly, they serve as an ideal example in which to investigate the various approaches to cancelations which appear in pseudo orbit expansions.  To this end while the permutation expansion of the determinant provides a neat direct route to the expansion using irreducible pseudo orbits, we also describe other methods of deriving this formula in the appendix.  In these cases a sum over pseudo orbits is seen to contain substantial cancelations when pseudo orbits are grouped appropriately.

Having obtained the expansion of the secular function in terms of pseudo orbits we demonstrated how this can be applied to three spectral quantities, the variance of the coefficients of the characteristic polynomial, the spectral determinant and the vacuum energy.
We provide an exact pseudo orbit expansion for these quantities using only dynamical information for a finite set of short periodic orbits
whose topological length is at most the length of the graph. In the case of the vacuum energy it is particularly interesting to note that such a quantity can be obtained from a finite set of dynamical properties.
In more complex physically realistic systems it is often hard to calculate the vacuum energy.  If it is possible to extend these techniques to compute the vacuum energy of higher dimensional systems one may be able to describe the vacuum energy precisely using orbits with a finite cut off by length.  This would seem to offer the possibility of precise numerical evaluation of the vacuum energy of general systems.  It is a hard problem to provide an accurate numerical evaluation of the vacuum energy, such calculations are often based on the proximity force approximation where a rigorous justification is lacking.

\ack{}{The authors would like to thank Lilian Matthiesen for the elegant proof of the combinatorial identity at the end of the appendix, Eric Akkermans, Jon Keating, Uzy Smilansky and Nina Snaith for helpful discussions and the anonymous referees for their comments.

  JMH would like to thank Bristol University for their hospitality during his sabbatical when the work was carried out. RB is supported by EPSRC, grant number EP/H028803/1.
  JMH was supported by the Baylor University research leave and summer sabbatical programs.
  }

\appendix
\section*{Appendix - More on cancelation mechanisms}
\setcounter{section}{1}

The appendix is dedicated to a description of the various canceling
mechanisms which play a role in the derivation of pseudo orbits expansions.
A combination of some of these schemes provides an alternative
proof of theorem \ref{thm:a_n_expansion}. Some cancelation mechanisms
were already presented in the literature in various contexts (not
necessarily for quantum graphs), as we will mention. However, not
all of them were fully developed to present the current pseudo orbits
expansion. The purpose of bringing them together here is to give
a state of the art description of the various mechanisms and their
interrelations.

\subsection{From an infinite expansion to a finite one}

We start from the characteristic polynomial of $U\left(k\right)$
and use the matrix identity $\det\left(\UI-U(k)\right)=\exp\textrm{Tr}\ln\left(\UI-U(k)\right)=\exp\textrm{Tr}\left(-\sum_{j=1}^{\infty}\frac{U(k)^{j}}{j}\, \right)$
to obtain\footnote[1]{This holds for $k+\ui \epsilon$ which is to be understood in the limit $\epsilon \to 0$.}

\begin{equation}
F_{\xi}(k)=\xi^{2B}\det(\UI-\xi^{-1}U(k))=\xi^{2B}\exp\left(-\sum_{j=1}^{\infty}\frac{\xi^{-j}T_{j}}{j}\right),\label{eq:characteristic_polynomial_1}
\end{equation}
 where we use the notation $T_{j}:=\mbox{Tr}[U(k)^{j}]$.

The traces $T_{j}$ can be expressed in terms of periodic orbits
\begin{equation}
T_{j}=\sum_{\gamma |\, B_{\gamma}=j}\frac{j}{r_{\gamma}}A_{\gamma}\ue^{\ui l_{\gamma}k},\label{eq:trace_in_terms_of_orbits}
\end{equation}
see notation in section \ref{sec:expansion_for_secular_function}.
Inserting (\ref{eq:trace_in_terms_of_orbits}) in (\ref{eq:characteristic_polynomial_1})
we see,
\begin{equation}\label{eq:part way}
    F_{\xi}(k) = \xi^{2B}\exp\left(-\sum_{j=1}^{\infty}\sum_{\gamma |\, B_{\gamma}=j}\xi^{-j}\frac{A_{\gamma}\ue^{\ui kl_{\gamma}}}{r_{\gamma}}\right) \ .
\end{equation}
Fixing a primitive periodic orbit $\pi$ and summing over its repetitions,
\begin{eqnarray}
F_{\xi}(k)
 & = & \xi^{2B}\exp\left(-\sum_{\pi\,\textrm{primitive}}
 \sum_{r=1}^{\infty}\xi^{-rB_{\pi}}\frac{A_{\pi}^{r}\ue^{\ui rkl_{\pi}}}{r}\right)\nonumber \\
 & = & \xi^{2B}\exp\left(\sum_{\pi\,\textrm{primitive}}
 \ln\left(1-\xi^{-B_{\pi}}A_{\pi}\ue^{\ui kl_{\pi}}\right)\right)\\
 & = & \xi^{2B}\prod_{\pi\,\textrm{primitive}}
 \left(1-\xi^{-B_{\pi}}A_{\pi}\ue^{\ui kl_{\pi}}\right)\ . \nonumber
\end{eqnarray}
Expanding the last line above, one would get an infinite sum,
each of whose terms is the contribution of some collection of primitive periodic
orbits. Namely, each term would correspond to a pseudo orbit which
consists of only primitive orbits; a \emph{primitive pseudo orbit}.
This is not to be confused with an irreducible pseudo orbit
see section \ref{sec:expansion_for_secular_function} - which
form a subset of the primitive ones. Keeping in mind that the characteristic
polynomial has only finite powers of $\xi$, $F_{\xi}\left(k\right)=\xi^{2B}\sum_{n=0}^{2B}a_{n}\xi^{-n}$,
we see that the contribution of many of these pseudo orbits cancel.
The remaining pseudo orbits give the following expression for the
$n^{\textrm{th}}$ coefficient of the polynomial:
\begin{equation}
a_{n}=\sum_{\overset{
\po\textrm{ primitive}}{
B_{\po}=n}}
(-1)^{m_{\po}}A_{\po}\ue^{\ui kl_{\po}} \ . \label{eq:cancellation_mechanism_1_result}
\end{equation}

\subsection{Cancelations due to a combinatorial identity}\label{subsec:cancelations}

A pseudo orbit expansion for quantum graphs appeared in \cite{p:KS99}.
The formula for evaluating the pseudo orbits contributions was given
there, but the coefficients were not computed explicitly. We compute
those coefficients and show that many cancelations occur due to a
combinatorial identity, leading again to the formula (\ref{eq:cancellation_mechanism_1_result}).

In the following we will relate pseudo orbits to partitions of
an integer.
A \emph{partition} of some positive integer, $n$, is an ordered set
$P=\left(m_{1},m_{2},\ldots,m_{n}\right)$ such that $\sum_{j=1}^{n}j\, m_{j}=n$.
In other words, it is a way to present $n$ as a sum of all positive
integers smaller than $n$ (each can be taken once, more than once,
or not at all).
The set of all partitions of $n$ is denoted by $\mathcal{P}_{n}$.

We now relate a partition of $n$ to a certain class of pseudo
orbits. Let $P=\left(m_{1},m_{2},\ldots,m_{n}\right)\in\mathcal{P}_{n}$ and
denote by $\tilde{\Gamma}_{P}$ the set of all pseudo orbits, $\tilde{\gamma}$,
such that for all $1\leq j\leq n$, $\tilde{\gamma}$
contains $m_{j}$ periodic orbits of topological length $j$.
Hence for $\po\in \tilde{\Gamma}_{P}$ we have $B_{\po}=n$.

We begin the proof by employing Newton's relations which connect coefficients of the characteristic polynomial
$a_{n}$ to traces of powers of $U\left(k\right)$ (see e.g. \cite{b:H:QSC}). One
form for these identities is
\begin{equation}
a_{n}=-\frac{1}{n}\sum_{j=0}^{n-1}\tr\left(U^{n-j}\left(k\right)\right)a_{j}\ , \label{eq:newton_identities}
\end{equation}
with $a_{0}=1$.
We use this recursive relation to give a closed expression for the
coefficients in terms of the traces.
\begin{equation}
a_{n}=\sum_{P\in\mathcal{P}_{n}}\frac{1}{m_{1}!}\left(-\frac{T_{1}}{1}\right)^{m_{1}}\frac{1}{m_{2}!}\left(-\frac{T_{2}}{2}\right)^{m_{2}}\cdot\ldots\cdot\frac{1}{m_{n}!}\left(-\frac{T_{n}}{n}\right)^{m_{n}},\label{eq:expression_for_a_n}
\end{equation}
 where $T_{j}:=\tr\left(U^{j}\left(k\right)\right)$.
A similar expression appears in \cite{b:H:QSC,p:KS99}. In what follows
we show that (\ref{eq:expression_for_a_n}) can be expanded as a sum over pseudo
orbits and that careful book-keeping allows a significant
number of cancelations.

The traces $T_{j}$ can be presented as sums over periodic orbits on
the graph as in (\ref{eq:trace_in_terms_of_orbits}).
Substituting in (\ref{eq:expression_for_a_n})
we obtain
\begin{eqnarray}
 a_{n}&=\sum_{P\in\mathcal{P}_{n}}\prod_{j=1}^{n}\frac{1}{m_{j}!}\left(-\frac{T_{j}}{j}\right)^{m_{j}} \nn \\ &=\sum_{P\in\mathcal{P}_{n}}  \prod_{j=1}^{n}\left\{ \frac{1}{m_{j}!}\left(-\frac{1}{j}\sum_{\gamma |\, B_{\gamma}=j}\frac{B_{\gamma}}{r_{\gamma}}A_{\gamma}\ue^{\ui kl_{\gamma}}\right)^{m_{j}}\right\} \nonumber \\
 & =\sum_{P\in\mathcal{P}_{n}} \prod_{j=1}^{n}\left\{ \frac{\left(-1\right)^{m_{j}}}{m_{j}!}\left(\sum_{\gamma |\, B_{\gamma}=j}\frac{1}{r_{\gamma}}A_{\gamma}\ue^{\ui kl_{\gamma}}\right)^{m_{j}}\right\} .\label{eq:expression_for_a_n_2}
\end{eqnarray}
One may expand the second sum and the product in the expression above.
This results in a sum, each of whose terms corresponds to a product
over periodic orbits. Such a product can be identified with a pseudo
orbit. More specifically,
\begin{equation}
a_{n}=\sum_{\po |B_{\po}=n}C_{\po}A_{\po}\ue^{\ui kl_{\po}},\label{eq:pseudo_orbits_expansion_before_cancellations}
\end{equation}
where $C_{\po}$ is a coefficient that consists of combinatorial
factors and repetition numbers. There are many cancelations in this
sum and we will show that eventually only primitive pseudo orbits
remain.

Let us fix some primitive periodic orbit, $\pi$. Consider all
pseudo orbits $\po$ with $B_{\po}=n$ for some fixed $n$, which contain only copies of $\pi$ and
its possible repetitions. For example, such a
pseudo orbit might be $\po=\left\{ 5\pi,2\pi^{3}\right\} $, which
means that $\po$ contains seven periodic orbits; five copies of $\pi$
and two copies of a periodic orbit where $\pi$ is repeated three times.
 For all such pseudo orbits $B_{\po}=n$ is a multiple
of $B_{\pi}$, and we denote $q:=B_{\po}/ B_{\pi}$.
Summing the terms in (\ref{eq:pseudo_orbits_expansion_before_cancellations}) over this collection of
pseudo orbits and reading the expression for $C_{\po}$ from
(\ref{eq:expression_for_a_n_2}), we obtain

\begin{equation}
\left\{\sum_{Q\in\mathcal{P}_{q}} \left(\prod_{r=1}^{q}\left(-1\right)^{n_{r}}\frac{1}{n_{r}!}\left(\frac{1}{r}\right)^{n_{r}}\right)\right\}\left(A_{\pi}\ue^{\ui kl_{\pi}}\right)^{q} ,\label{eq:camcellations-1}
\end{equation}
where the sum is over partitions of $q$, $Q=\left(n_{1},\ldots,n_{q}\right)$.
To obtain (\ref{eq:camcellations-1}) observe that;
\begin{enumerate}
\item We only sum over pseudo orbits belonging to $n$-partitions in (\ref{eq:expression_for_a_n_2})
with $m_{j}=0$ for all $j$ values which are not multiple of $B_{\pi}$.
So we may enumerate these by considering the partitions of $q=B_{\po}/B_{\pi}$ where $n_r=m_{rB_{\pi}}$.
\item Each partition of $q$ determines a unique pseudo orbit constructed only from $\pi$. Denoting a $q$-partition by $Q=\left(n_{1},\ldots,n_{q}\right)$,
the repetition numbers of the orbits within our pseudo
orbit are just the indices of the $n_{r}$'s.
\item All these pseudo orbits are weighted by the same factor, $\left(A_{\pi}\ue^{\ui kl_{\pi}}\right)^{q}$.
\end{enumerate}
Lemma \ref{lem:combinatorial_cancellation} will show that the sum
in (\ref{eq:camcellations-1}) is zero for all $q>1$. This
implies that the only surviving contribution comes from the pseudo orbit that contains
$\pi$ alone.
Furthermore, we can extend this cancelation argument as follows. Fix $n$ and let $\pi_1$ and $\pi_2$ be two primitive periodic orbits, such that $n=q_1B_{\pi_1}+q_2B_{\pi_2}$. We now examine all pseudo orbits with $n$ bonds which contain only copies of $\pi_1$, $\pi_2$, and their
 repetitions. Collecting the contributions of all such orbits from (\ref{eq:expression_for_a_n_2}), we arrive at
\begin{equation*}
\fl \sum_{Q_1\in\mathcal{P}_{q_1}} \left(
\prod_{r=1}^{q_1}\left(-1\right)^{n_{r}}\frac{1}{n_{r}!}\left(\frac{1}{r}\right)^{n_{r}} \right)
\sum_{Q_2\in\mathcal{P}_{q_2}} \left(
\prod_{r=1}^{q_2}\left(-1\right)^{n_{r}}\frac{1}{n_{r}!}\left(\frac{1}{r}\right)^{n_{r}}\right) 
(A_{\pi_1}\ue^{\ui kl_{\pi_1}})^{q_1}
(A_{\pi_2}\ue^{\ui kl_{\pi_2}})^{q_2}.\label{eq:camcellations-2}
\end{equation*}
 Applying lemma \ref{lem:combinatorial_cancellation} again, the sum above vanishes if either $q_1>1$ or $q_2>1$ and hence the only contribution that does not vanish is for the primitive pseudo orbit $\{\pi_1,\pi_2\}$.
Iterating this argument we conclude that all contributions from non-primitive pseudo orbits
cancel. Hence
\begin{equation}
a_{n}=\sum_{\overset{\po\textrm{ primitive}}{B_{\po}=n}}(-1)^{m_{\po}}A_{\po}\ue^{\ui kl_{\po}},\label{eq:cancellation_mechanism_2_result}
\end{equation}
which agrees with (\ref{eq:cancellation_mechanism_1_result})
obtained using the first canceling mechanism.

\begin{lemma}\label{lem:combinatorial_cancellation}
For all $q\in \bbN$ with $q>1$,
\[
\sum_{P\in\mathcal{P}_{q}}\prod_{r=1}^{q}\left(-1\right)^{n_{r}}\frac{1}{n_{r}!}\left(\frac{1}{r}\right)^{n_{r}}=0.
\]

\end{lemma}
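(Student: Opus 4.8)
The plan is to recognize the sum over partitions of $q$ as the coefficient extraction from a generating function identity, namely the formal power series expansion of the exponential of $-\log(1-x)$. First I would observe that the quantity
\[
\sum_{P\in\mathcal{P}_{q}}\prod_{r=1}^{q}\left(-1\right)^{n_{r}}\frac{1}{n_{r}!}\left(\frac{1}{r}\right)^{n_{r}}
\]
is precisely the coefficient of $x^q$ in the formal product $\prod_{r=1}^{\infty}\exp\!\left(-\frac{x^r}{r}\right)$, since expanding each exponential as $\sum_{n_r\geq 0}\frac{1}{n_r!}\left(-\frac{x^r}{r}\right)^{n_r}$ and collecting the terms contributing to $x^q$ gives exactly a sum over all partitions $(n_1,\dots,n_q)$ of $q$, with $n_r$ the number of parts equal to $r$. (The condition $\sum_r r\, n_r = q$ is the defining relation of a partition of $q$, and parts larger than $q$ cannot appear, so truncating the product at $r=q$ is harmless for the $x^q$ coefficient.)

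The key step is then to sum the series in the exponent: $\sum_{r=1}^{\infty}\frac{x^r}{r}=-\log(1-x)$, so that
\[
\prod_{r=1}^{\infty}\exp\!\left(-\frac{x^r}{r}\right)=\exp\!\left(\sum_{r=1}^{\infty}-\frac{x^r}{r}\right)=\exp\bigl(\log(1-x)\bigr)=1-x .
\]
Hence the coefficient of $x^q$ in this series is $1$ for $q=0$, $-1$ for $q=1$, and $0$ for all $q\geq 2$. In particular, for every $q>1$ the sum vanishes, which is the claim.

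I would carry this out in the order: (i) state the generating-function interpretation and justify it by expanding the exponentials and matching the $x^q$ coefficient with the partition sum; (ii) evaluate the exponent using the logarithm series (a standard formal power series identity, no convergence issue needed since everything is at the level of formal power series in $x$); (iii) read off the coefficient of $x^q$ from $1-x$ and conclude. The main obstacle — really the only point requiring a little care — is the bookkeeping in step (i): making sure the combinatorial factors $\frac{1}{n_r!}$ and signs $(-1)^{n_r}$ coming from the multinomial-type expansion of the product of exponentials match exactly the summand in the lemma, and that one is allowed to truncate the infinite product at $r=q$ without affecting the $x^q$ coefficient. Everything else is a one-line formal power series computation.
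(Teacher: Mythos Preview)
Your proposal is correct and follows essentially the same route as the paper's proof: identify the partition sum as the $x^q$ coefficient of $\prod_{r\ge 1}\exp(-x^r/r)$, collapse the exponent via the logarithm series to obtain $1-x$, and read off the vanishing for $q>1$. The only difference is that you spell out the bookkeeping in step (i) a bit more explicitly than the paper does.
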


\begin{proof}

The expression we wish to vanish is equal to the coefficient
multiplying $x^{q}$ in the following generating function,
\begin{eqnarray*}
\prod_{r=1}^{\infty}\sum_{n_{r}=0}^{\infty}\frac{1}{n_{r}!}\left(-\frac{x^{r}}{r}\right)^{n_{r}} & = & \prod_{r=1}^{\infty}\exp\left(-\frac{x^{r}}{r}\right)\\
 & = & \exp\left(-\sum_{r=1}^{\infty}\frac{x^{r}}{r}\right)\\
 & = & \exp\left(\ln\left(1-x\right)\right)\\
 & = & 1-x \ .
\end{eqnarray*}

\end{proof}

\subsection{Further diagrammatic cancelations}

Comparing the results of the two canceling mechanisms
(\ref{eq:cancellation_mechanism_1_result}) and (\ref{eq:cancellation_mechanism_2_result})
with theorem \ref{thm:a_n_expansion},
we see the difference is between a sum over all primitive pseudo
orbits and a sum only over the irreducible ones. More specifically,
in theorem \ref{thm:a_n_expansion} we do not sum over primitive pseudo orbits which
contain the same directed bond more than once. It is possible to show
that the contribution of all such periodic orbits cancel each other using a diagrammatic
approach, as was shown very nicely in \cite{AkkComDebMonTex_ap00}.
One should note that in \cite{AkkComDebMonTex_ap00}
the diagrammatic cancelations are presented assuming each vertex scattering matrix, $\sigma^{\left(v\right)}$,
has all its off diagonal entries equal.
However, their diagrammatic cancelations which can be used to transform (\ref{eq:cancellation_mechanism_2_result})
into (\ref{eq:a_n_expansion}) do not require this assumption.
We also note that a similar cancelation mechanism appears in
\cite{Bog_chaos92}. However, the derivation there is done for the
dynamical zeta function and the cancelations occur between orbits
which pass through the same cell in the Poincar\'e surface of section.

\section*{References}{\bibliographystyle{plain}
\bibliography{Pseudo_Orbits_and_Riemann_Siegel,Trace_Formulae,reffs-jmh,Qunatum_Graphs_Introduction}
}

\end{document}